\theoremstyle{plain}
\newtheorem{thm}{Theorem}
\newtheorem{lemma}{Lemma}
\theoremstyle{definition}
\newtheorem{defn}{Definition}
\newtheorem{ex}{Example}
\newtheorem*{rem*}{Remark}
\theoremstyle{remark}
\newcommand{\R}{\mathbb{R}}
\newcommand{\N}{\mathbb{N}}
\renewcommand{\S}{\mathcal{S}}
\newcommand{\ac}{\mathcal{A}} % actions
\newcommand{\rr}{\mathcal{R}} % resources
\newcommand{\p}{\mathcal{I}} % agent
\newcommand{\W}{\mathrm{W}} % welfare
\newcommand{\U}{\mathrm{U}_i} % utility
\newcommand{\G}{\mathrm{G}} % game
\newcommand{\Gset}{\mathcal{G}_{f, n}} % set of games
\newcommand{\br}{\text{BR}_i} % best response
\renewcommand{\ne}{a^{\mathrm{ne}}} % nash equilibrium
\newcommand{\kend}{\mathrm{end}} % end action
\newcommand{\NE}{\mathrm{NE}} % Nash set
\newcommand{\pob}{\mathrm{PoB}} % price of best response
\newcommand{\poa}{\mathrm{PoA}} % price of anarchy
\newcommand{\ba}{\mathbf{a}}  % which agents select
\newcommand{\emp}{a^{\varnothing}}  % null action
\newcommand{\opt}{a^{\mathrm{opt}}}  % optimum action
\newcommand{\abr}{a^{\mathrm{br}}}  % br action
\newcommand{\pobset}{\pob(f, n, k)}
\newcommand{\poaset}{\poa(f, n)}
\newcommand{\ipoa}{\mathcal{X}}
\newcommand{\babr}{\mathbf{a}_{\mathrm{br}}}  % optimal select
\newcommand{\baopt}{\mathbf{a}_{\mathrm{opt}}}  % optimal select
\begin{document}
% \title{Asymptotically Optimal Utility Designs Have Poor Transient Guarantees in Set Covering Games}
\title{Balancing Asymptotic and Transient Efficiency Guarantees\\ in Set Covering Games}

\author{Rohit Konda, Rahul Chandan, David Grimsman and Jason R. Marden
\thanks{R. Konda (\texttt{rkonda@ucsb.edu}), R. Chandan, D. Grimsman and J. R. Marden are with the Department of Electrical and Computer Engineering at the University of California, Santa Barbara, CA. This work is supported by \texttt{ONR Grant \#N00014-20-1-2359} and \texttt{AFOSR Grant \#FA9550-20-1-0054}.}}

\maketitle
\thispagestyle{empty}

\begin{abstract}
Game theoretic approaches have gained traction as robust methodologies for designing distributed local algorithms that induce a desired overall system configuration in multi-agent settings. However, much of the emphasis in these approaches is on providing asymptotic guarantees on the performance of a network of agents, and there is a gap in the study of efficiency guarantees along transients of these distributed algorithms. Therefore, in this paper, we study the transient efficiency guarantees of a natural game-theoretic algorithm in the class of set covering games, which have been used to model a variety of applications. Our main results characterize the optimal utility design that maximizes the guaranteed efficiency along the transient of the natural dynamics. Furthermore, we characterize the Pareto-optimal frontier with regards to guaranteed efficiency in the transient and the asymptote under a class of game-theoretic designs. Surprisingly, we show that there exists an extreme trade-off between the long-term and short-term guarantees in that an asymptotically optimal game-theoretic design can perform arbitrarily bad in the transient.  
\end{abstract}

\section{Introduction}
\label{sec:int}

Recently, game theory has emerged as a significant framework in the study of multi-agent systems: the agents in the system are modeled as agents in a game, each with its own actions and individual rewards that are functions of the joint action of the other agents. Some prominent examples where these models have found success include wireless communication networks \cite{han2012game}, UAV swarm task allocation \cite{roldan2018should}, news subscription services \cite{hsu2020information}, vaccinations during an epidemic \cite{hota2019game}, security systems \cite{9028855}, facility location \cite{1181966}, coordinating the charging of electric vehicles \cite{martinez2020decentralized}, and national defense \cite{lee2020perimeter}, among others. Indeed, game theory has played a significant role in many applications and across various disciplines.

A central focus in using game-theoretic models to control multi-agent systems is the design of local objective (or reward) functions for the agents that result in a desirable emergent system-level outcome, as measured by a system welfare function \cite{marden2015game}. Generally the emergent outcome is assumed to be a coordinated decision which is a Nash equilibrium of the game, wherein each agent cannot improve its own objective function by switching actions. A significant research effort has been devoted to designing local objective functions that ensure good performance of Nash equilibria. For example, \cite{gairing2009covering} shows that for the class of set covering games, there exists a set of local objective functions which guarantee that the performance of any Nash equilibrium is within a multiplicative factor of $1 -\frac{1}{e}$ of the optimal decision set. Furthermore, no other polynomial time algorithm can perform better \cite{feige1998threshold}. 

%These results have also been extended to resource allocation games in \cite{paccagnan2018utility}.

%A significant part of the game-theoretic literature dedicates itself to this issue under the umbrella of \emph{price of anarchy} analysis. Under a game formulation of the multi-agent setting, the optimality of a coordinated decision is evaluated through a system welfare function. Moreover, the Nash equilibrium of the game is taken as a model of the emergent system behavior, where each agent does not wish to deviate under the coordinated decision. Thus, to ensure good emergent system level behavior, guarantees of the performance of the Nash equilibrium of the game are given with respect to the  welfare function through price of anarchy analysis.

Rather than focusing solely on the system welfare at the Nash equilibria, this work studies the welfare of decision sets along the transient path toward equilibrium. This may be important when the relevant system parameters evolve on faster time scales, agent modalities vary with time etc. The class of \emph{potential games} \cite{monderer1996potential} is particularly relevant to this goal as the class of \emph{best response/reply processes} and its many variants (see, e.g., \cite{young1993evolution}) are known to converge to a Nash equilibrium asymptotically from any initial configuration. Best reply processes are intimately tied with the development of game theory, present in Cournot processes, Nash's seminal paper \cite{nash1950equilibrium}, as well as stable and evolutionary equilibrium \cite{sandholm2020evolutionary}. Efficiency guarantees along these best response processes are much less studied than the asymptotic guarantees that come with Nash equilibrium. In this work, we restrict our attention to an important subclass of potential games known as \emph{set covering games}, in which agents with limited capabilities try to coordinate to cover as large of an area as possible. For set covering games, we characterize bounds on the efficiency for best response processes, and compare utility designs that maximize either the short- or long-efficiency.

While few works study the guarantees that can arise through best response processes, we highlight a subset of important works that deal with analyzing the emergent behavior of best response dynamics. The authors in \cite{christodoulou2006convergence, bilo2009performances} analyze lower bounds for the efficiency guarantees resulting after a single best response round in linear congestion and cut games. These guarantees were verified through a linear programming approach in \cite{bilo2018unifying} and extended to quadratic and cubic congestion games. Best response dynamics are also studied in \cite{mirrokni2004convergence}, where it was found that the efficiency resulting from best response dynamics can be arbitrarily bad for non-potential games. The speed of convergence to efficient states is studied in \cite{fanelli2008speed}. While similar in spirit, our paper extends the previous literature by moving from characterization results of prespecified utility designs to instead identifying optimal game-theoretic designs that maximize the efficiency guarantees along a best response process.

The contributions of this paper are all with respect to the class of set covering games and are as follows: 
\begin{enumerate}
    \item We identify the local objective that optimizes the transient efficiency guarantee along a best response dynamic;
    \item We identify a trade-off between transient and asymptotic guarantees, and provide explicit characterizations of the Pareto frontier and Pareto-optimal agent objective functions.
\end{enumerate}

The rest of the paper is organized as follows: In Section~\ref{sec:back}, we introduce the necessary notation and definitions for best response dynamics, as well as for set covering games. We then outline our main results in Section \ref{sec:main}. Section \ref{sec:proofthm} contains the proof of the main result. We present a simulation study in Section \ref{sec:sim} to examine how our findings translate to practice. Finally, we conclude in Section \ref{sec:conc}.

\section{Model}
\label{sec:back}

\noindent \emph{Notation.} We use $\R$ and $\N$ to denote the sets of real and natural numbers respectively. Given a set $\S$, $|\S|$ represents its cardinality. $\mathbf{1}_S$ describes the indicator function for set $S$ ($\mathbf{1}_S(e) = 1$ if $e \in S$, $0$ otherwise).

\subsection{Set Covering Games and Best Response Dynamics}
\label{subsec:PoAB}

We present all of our results in the context of \emph{set covering games} \cite{gairing2009covering}, a well-studied class of potential games. We first recall some definitions to outline our game-theoretic model, which is used to describe the given class of distributed scenarios. Set covering games are characterized by a finite set of resources $\rr$ that can be possibly covered by a set of $n$ agents $\p = \{1, \dots, n\}$. Each resource $r$ has an associated value $v_r\geq 0$ determining its relative importance to the system. Each agent $i$ has a finite action set $\ac_i = \{a^1_i, \dots, a_i^c\} \subset 2^{\rr}$, representing the sets of resources that each agent can opt to select. The performance of a joint action $a = (a_1, \dots, a_n) \in \ac = \ac_1 \times \cdots \times \ac_n$ comprised of every agent's actions is evaluated by a system-level objective function $\W: \ac \to \R_{\geq 0}$, defined as the total value covered by the agents, i.e.,
\begin{equation}
    \W(a) = \sum_{r \in \bigcup_i a_i}{v_r}.
\end{equation}
The goal is to coordinate the agents to a joint action $\opt$ that maximizes this welfare $\W(a)$. Thus, each agent $i\in\p$ is endowed with its own utility function $\U : \ac \to \R$ that dictates its preferences among the joint actions. We use $a_{-i} = (a_1, \dots, a_{i-1}, a_{i+1}, \dots a_n)$ to denote the joint action without the action of agent $i$. A given set covering game $\G$ can be summarized as the tuple $\G  = (\p, \ac, \W, \{\U\}_{i \in \p})$. We consider the utility functions of the form
\begin{equation}
    \U(a_i, a_{-i}) = \sum_{r \in a_i} v_r \cdot f(|a|_r),
\end{equation}
where $|a|_r$ is the number of agents that choose resource $r$ in action profile $a$, and the \emph{utility rule} $f : \{1, \dots, n\} \to \R_{\geq 0 }$ defines the resource independent agent utility determined by $|a|_r$. This utility rule will be our design choice for defining the agents preferences towards each of their decisions. To motivate our results, we provide an example application of set covering games below, which represents just one among many \cite{hochbaum1985approximation, mandal2021covering, nash1977optimal}.

\begin{ex}[Wireless Sensor Coverage \cite{huang2005coverage}]
Consider a group of sensors $s_1, s_2, \dots, s_n$ that can sense a particular region depending on the orientation, physical placement of the sensor, etc. The choice of these parameters constitute the different possible actions $\ac_i$ for each sensor. We partition the entire space into disjoint regions $r\in\rr$ and each region $r$ is attributed a weight $v_r\geq 0$ to signify the importance of covering it. As a whole, the set of sensors wish to maximize the total weight of the regions covered. Using a game-theoretic model of this scenario allows us to abstract away much of the complexity that arises from having to consider specific sensor types and their respective capabilities, in contrast to \cite{huang2005coverage}. In this scenario, the results in the paper allow us to study guarantees of performance of local coordination algorithms that the sensors may employ. We further note that short-term performance guarantees may also be desirable especially when there is some drift in the parameters of the game (e.g., in $v_r$ or $\ac_i$).
\end{ex}

Next, we introduce some notation relevant to the study of best reply processes. For a given joint action $\bar{a} \in \ac$, we say the action $\hat{a}_i$ is a \emph{best response} for agent $i$ if 
\begin{equation}
\hat{a}_i \in \ \br(\bar{a}_{-i}) = \arg \max _ {a_i \in \ac_i} \U(a_i, \bar{a}_{-i}). 
\end{equation}
Note that $\br(\bar{a}_{-i})$ may be set valued. In this fashion, the underlying discrete-time dynamics of the best response process is
\begin{align}
    a_i[m+1] &\in \br(a_{-i}[m]) \text{ for some agent } i \\
    a_{-i}[m+1] &= a_{-i}[m] \nonumber,
\end{align}
where at each step, a single agent acts in a best response to the current joint action profile. It is important to stress that the fixed points of this best response process are Nash equilibria of the game, i.e., a joint action $\ne \in \ac$ such that
\begin{equation}
    \ne_i \in \br(\ne_{-i}) \text{ for all agents } i \in \p.
\end{equation}
In the case of potential games \cite{monderer1996potential}, the set of Nash equilibrium $\NE$ is also globally attractive with respect to the best response dynamics. A potential game is defined as any game for which there exists a potential function $\phi$ where for all agents $i$ and pairs of actions $a, a'$,
\begin{equation}
    \phi(a_i, a_{-i}) - \phi(a_i', a_{-i}) = \U(a_i, a_{-i}) - \U(a_i', a_{-i}).
\end{equation}
In these games, the potential function decreases monotonically along the best response process in a similar manner to a Lyapunov function. For non-potential games, there is a possibility that the best response process ends up in a cycle \cite{goemans2005sink}, but this falls outside the scope of this paper. 

In this work we assume that the best response process is such that the agents are ordered from $1$ to $n$ and perform successive best responses in that order. While other best response processes have been previously studied (see, e.g., $k$ covering walks \cite{christodoulou2006convergence}), we limit our analysis to this specific process for tractability. We also assume that each agent has a null action $\emp_i$ available to them representing their non-participation in the game $\G$.

\begin{defn}
Given a game $\G$ with $n$ agents, a \emph{$k$-best response round} is a $n \cdot k$ step best reply process with an initial state $a[0] = \emp$ and the corresponding dynamics
\begin{align}
    a_i[m+1] &\in \br(a_{-i}[m]) \text{ for agent } i = (m+1 \text{ mod } n) \nonumber \\
    a_{-i}[m+1] &= a_{-i}[m].
\end{align}
\end{defn}
This process results in a set of possible action trajectories $\Sigma^k \ni \sigma^k = (a[0], a[1] \dots, a[kn - 1 ], a[kn])$ with a set of possible end action states $\kend(k) = \{\sigma^k(kn) : \sigma^k \in \Sigma^k\}$ resulting after running for $k \cdot n$ time steps.

\subsection{Price of Best Response and Price of Anarchy}

The primary focus of this paper is to analyze guarantees on the efficiency of possible end actions $\kend(k)$ after a $k$-best response round with respect to the system welfare $\W$. In this regard, we introduce the \emph{price of best response} metric to evaluate the worst case ratio between the welfare of the outcome of the $k$-best response round and the optimal welfare in a game.

\begin{equation}
\pob(\G, k) = \frac{\min_{a \in \kend(k)}{\W(a)}}{\max_{a \in \mathcal{A}}{\W(a)}}.
\end{equation}

Our study of the price of best response will be in direct comparison with the standard game-theoretic metric of \emph{price of anarchy}, which is defined as

\begin{equation}
\poa(\G) = \frac{\min_{a \in \NE}{\W(a)}}{\max_{a \in \mathcal{A}}{\W(a)}}.
\end{equation}
The price of anarchy is a well established metric, with a number of results on its characterization, complexity, and design \cite{paccagnan2018utility}. We frame price of anarchy as a backdrop to many of our results, where Nash equilibria are the set of end states that are possible, following a $k$-best response round in the limit as $k \to \infty$. Under this notation, we can analyze the transient guarantees, which have been relatively unexplored, in these distributed frameworks in comparison to the asymptotic setting.

% \subsection{Set Covering Games}
% \label{subsec:setcover}

To isolate the efficiency analysis over the utility rule $f$ and the number of agents $n$, we consider the price of best response with respect to the class of all set covering games $\Gset$ with a fixed utility rule $f$ and $n$ number of agents, and any resource set $\rr$ and corresponding values $v_r$ and set of agent actions $\ac$:

\begin{equation}
\pobset = \inf_{\G \in \Gset} \pob(\G, k).
\end{equation}
Similarly, we define the price of anarchy over $\Gset$ as

\begin{equation}
\poaset = \inf_{\G \in \Gset} \poa(\G).
\end{equation}
When appropriate, we may also remove the dependence on the number of agents by denoting $\pob(f, k) = \inf_{n \in \N} \pob(f, n, k)$ (and likewise for $\poa$) to take the worst case guarantees across any number of agents.

We are interested in the transient and asymptotic system guarantees that can result from a given utility rule. Therefore we bring to attention two well-motivated utility rules to highlight the spectrum of guarantees that are possible. The first utility rule $f_{\poa} \in \arg \max_f \poa(f)$, introduced in \cite{gairing2009covering}, attains the optimal price of anarchy guarantee for the class of set covering games and is defined as
\begin{equation}
    f_{\poa}(j) = \sum_{\ell = j}^{\infty}{\frac{(j-1)!}{\ell!(e - 1)}}.
\end{equation}
The second rule is the \emph{marginal contribution} (MC) rule. The MC rule is a well studied \cite{vetta2002nash} utility rule and is defined as
\begin{equation}
    \label{eq:MCdefgen}
    f_{MC}(j) = \left\{\begin{array}{lr}
        1, & \text{for } j = 1\\
        0, & \text{otherwise}\\
        \end{array}\right\}.
\end{equation}
Note that the induced utility under this mechanism is $\U(a_i, a_{-i}) = \W(a_i, a_{-i}) - \W(\emp_i, a_{-i})$, where the utility for agent $i$ is the added welfare from playing its null action $\emp_i$ to playing $a_i$. Using $f_{MC}$ ensures that all Nash equilibria of the game are local optima of the welfare function. However, it is known that the MC rule has sub-optimal asymptotic guarantees as $\poa(f_{MC}) = \frac{1}{2} < 1 - \frac{1}{e} = \poa(f_{\poa})$ \cite{ramaswamy2019multiagent}. Thus, under the standard tools of asymptotic analysis, $f_{\poa}$ has been considered to be `optimal' in inducing desirable system behavior. Utilizing the notion of $\pob$, we inquire if the favorable characteristics of $f_{\poa}$ translate to good short-term behavior of the system. Intuitively, since $f_\poa$ guarantees optimal long-term performance, we expect the same in the short-term as well.
% Since the best response algorithm asymptotically converges to Nash equilibrium, we expect the short-term performance guarantees of $f_{\poa}$ to be similarly well-performing. 
This discussion motivates the following three questions:
\begin{enumerate}
    \item[(a)] For a given $k\geq 1$, is $\pob(f_\poa, n, k)$ always greater than $\pob(f_{MC}, n, k)$?.
    \item[(b)] If not, then for how many rounds does the best response algorithm under $f_\poa$ have to be run for the guarantees on the end state performance to surpass the asymptotic performance of marginal contribution? In other words, is there $k\geq 1$ for which $\pob(f_{\poa}, n, k) \geq \poa(f_{MC}, n) = \frac{1}{2}$?
    \item[(c)] For a given $k\geq 1$, what is the utility mechanism that maximizes the end state performance after running the best response algorithm for $k$ rounds, i.e. what is $\arg \max_f \pob(f, n, k)$?
\end{enumerate}
In the next section, we answer these questions in detail.

\section{Main Results}
\label{sec:main}
In this section, we outline our main results for price of best response guarantees. In this setting, utility functions are designed by carefully selecting the utility rule $f$ and improvement in transient system level behavior corresponds with a higher price of best response. Hence, we characterize the optimal $f$ that achieves the highest possible $\pob$. We formally state the optimality of the MC rule for the class of set covering games below.

\begin{thm}
\label{thm:MCopt}
Consider the class of set covering games with $n \geq 2$ users. The following statements are true:
\begin{enumerate}
%
%\item[(a)] For any number of rounds $k \geq 1$, the marginal contribution rule $f_{MC}$ outperforms $f_{\poa}$, i.e., 
%
%$$ \pob(f_{\poa}, n, k) \leq \pob(f_{MC}, n, k)$$
%
\item[(a)] For any number of rounds $k \geq 1$, the transient performance guarantees associated with the utility mechanisms $f_{MC}$ and $f_\poa$ satisfy
\begin{align}
    \pob(f_{MC}, n, k) &= 1/2, \label{eq:pobSCMC1}\\
    \pob(f_\poa, n, k) &\leq 1/2.
\end{align}
Furthermore, the marginal contribution rule achieves its asymptotic guarantees after a single round, i.e.,
\begin{equation}
    \pob(f_{MC}, n, k) = \pob(f_{MC}, n, 1).
\end{equation}
\item[(b)] For any number of rounds $k \geq 1$, the marginal contribution rule $f_{MC}$ optimizes the $\pob$, i.e.,  
\begin{equation}
    \max_{f} \pob(f, n, k) = \pob(f_{MC}, n, k) = 1/2.
\end{equation}
Alternatively, there is no utility rule $f$ with $\pob(f,n,k)>1/2$ for any $k$.  
\end{enumerate}
\end{thm}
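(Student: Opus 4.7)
My strategy is to split the proof into lower and upper bounds on $\pob(f,n,k)$, with the upper bounds serving both parts (a) and (b) simultaneously.

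For the lower bound $\pob(f_{MC}, n, k) \geq 1/2$ in part (a), I would exploit the defining identity $\U(a_i,a_{-i}) = \W(a_i,a_{-i}) - \W(\emp_i,a_{-i})$ of the marginal contribution rule. Since $\W(\emp_i,a_{-i})$ is independent of $a_i$, agent $i$'s best response directly maximizes $\W(a_i,a_{-i})$. Two consequences follow: (i) the welfare $\W$ is non-decreasing along the BR dynamics initialized at $\emp$, and (ii) the resulting process starting from $\emp$ coincides with the greedy algorithm for monotone submodular maximization over a partition matroid constraint (each agent picks one action from its menu), since the set-covering welfare is monotone submodular. The standard greedy analysis yields $\W(a[n]) \geq (1/2)\W(\baopt)$ after one round, and by (i) this extends to $\W(a[nk]) \geq (1/2)\W(\baopt)$ for all $k \geq 1$, hence $\pob(f_{MC},n,k)\geq 1/2$.

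For the matching upper bounds -- tightness of $\pob(f_{MC}, n, k) = 1/2$, the inequality $\pob(f_\poa, n, k) \leq 1/2$, and the general $\pob(f, n, k) \leq 1/2$ in (b) -- I would exhibit tailored worst-case instances. A clean starting point is a two-agent game with two resources of equal value $v_1 = v_2 = 1$, where agent $1$ chooses between $\{r_1\}$ and $\{r_2\}$ and agent $2$ is restricted to $\{r_1\}$. Under any $f$ with $f(1) > 0$, agent $1$ is indifferent in round one and the worst-case BR selects $\{r_1\}$; agent $2$'s forced play of $\{r_1\}$ then produces welfare $1 = (1/2)\W(\baopt)$. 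This yields $\pob(f, n, 1) \leq 1/2$ for every $f$ at once. For $k \geq 2$, many rules would have agent $1$ escape the bad state in round two, so the instance must be augmented, for example by inserting auxiliary agents whose unique action covers $r_1$ and tuning their count so that agent $1$'s utility for $\{r_1\}$ remains at least its utility for $\{r_2\}$ throughout all $k$ rounds, making the bad configuration a weak Nash under $f$. Once this general upper bound is in place, the equality $\pob(f_{MC}, n, k) = \pob(f_{MC}, n, 1) = 1/2$ in (a) follows by sandwiching the MC lower bound against the part-(b) upper bound.

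The main obstacle is the parametric construction needed for part (b), since the structure of the worst-case instance depends on the relative magnitudes of $f(1), f(2), \ldots, f(n)$. I would handle this by a case split: utility rules with $f(1) \leq f(j)$ for some $j$ (so overlap is incentivized) admit a simple symmetric two-agent game in which both agents over-cover the same resource, producing a strict Nash with welfare $(1/2)\W(\baopt)$; utility rules with $f(1) > f(j)$ for all $j \geq 2$ require the forcing-agent augmentation described above, with the number of auxiliary agents chosen as a function of $f$ so that the tie at $r_1$ is preserved across $k$ rounds. Showing that such a choice always exists and produces a valid set-covering game for every admissible $f$ is the technically delicate step of the argument.
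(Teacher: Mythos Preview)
Your lower-bound argument for $f_{MC}$ is correct and in fact a bit more self-contained than the paper's: the paper obtains $\pob(f_{MC},n,1)=1/2$ by specializing a general linear-programming characterization of $\pob(f,n,1)$ (its Lemma~\ref{lem:setpob}), whereas you invoke the greedy bound for monotone submodular maximization directly. The monotonicity-of-welfare step is identical in both.

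The genuine gap is in your upper-bound construction for $k\geq 2$. In the case $f(1)>f(j)$ for all $j\geq 2$ --- which is precisely the case covering both $f_{MC}$ and $f_{\poa}$ --- your ``forcing-agent'' idea works in the wrong direction. If you pile auxiliary agents onto $r_1$, agent~$1$'s utility for $\{r_1\}$ becomes $v_{r_1}\cdot f(m)$ for some $m\geq 2$, which under a strictly decreasing $f$ is \emph{strictly smaller} than $v_{r_2}\cdot f(1)$; agent~$1$ then deviates to $r_2$ at the start of round~$2$ and the welfare jumps to optimal. No number of auxiliary agents on $r_1$ can manufacture the tie you need, so the bad state cannot be sustained this way for even two rounds.

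The paper's construction avoids this by keeping agent~$1$ \emph{alone} on $r_1$ throughout the first $k-1$ rounds (so its tie between $r_1$ and $r_2$ is preserved trivially), and instead engineering indifference for agent~$2$. It introduces a third resource $r_3$ with value $v_{r_3}=f(2)$ available only to agent~$2$; then at the profile $(\{r_1\},\{r_3\})$, agent~$2$ gets $f(2)$ either by staying on $r_3$ or by joining agent~$1$ on $r_1$, while agent~$1$ is indifferent between $r_1$ and $r_2$. This profile is therefore a Nash equilibrium for \emph{every} $f$, so the best-response process can sit there for $k-1$ rounds and have agent~$2$ switch to $r_1$ only in the final round, yielding welfare $1$ against an optimum of $2$. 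The key idea you are missing is that the persistent indifference should be created by a tailored \emph{resource value} (depending on $f$) rather than by the multiplicity of agents on $r_1$.
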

% \begin{thm}
% \label{thm:MCopt}
% Consider the class of set covering games. Then for any $k$ rounds, any utility rule $f$, and any number of agents $n \geq 2$,
% \begin{align}
%     \frac{1}{2} &= \max_{f, k} \pob(f, n, k) = \pob(f_{MC}, n, 1) \label{eq:pobSCMC1}\\
%     \label{eq:pobSCMCk} &= \pob(f_{MC}, n, k) \\
%     &\geq  \label{eq:pobSCfk}
% \end{align}
% \end{thm}
\begin{proof}
To prove the statements outlined in items $\rm{(a)}$ and $\rm{(b)}$ of the theorem statement, we show, more generally, that for any $f$ and any $k$, the following series of inequalities is true:
\begin{align*}
    \frac{1}{2} = \pob(f_{MC}, n, 1) = \pob(f_{MC}, n, k) \geq \pob(f, n, k).
\end{align*}
The fact that $\pob(f_{MC}, n, 1) = 1/2$ is a result of Lemma \ref{lem:setpob} (stated later in the paper) for $n \geq 2$ and the definition of $f_{MC}$. For proving $\pob(f_{MC}, n, k) = \frac{1}{2}$, we first show that 
\begin{equation}
    \pob(f_{MC}, n, k) \geq \pob(f_{MC}, n, 1).
\end{equation} Let $a, a' \in \ac$ be joint actions such that $a'$ is a best response to $a$ for agent $i$. If the MC rule is used as in Equation \eqref{eq:MCdefgen}, then 
\begin{align*}
    \W(a') - \W(a) &= \W(a_i', a_{-i}) - \W(\emp_i, a_{-i}) \\
    &\qquad + \W(\emp_i, a_{-i}) - \W(a_i, a_{-i}) \\
    &= \U(a_i', a_{-i}) - \U(a_i, a_{-i}) \\
    &\geq 0.
\end{align*}
Then, after any $m \geq 1$ best response rounds, $\W(\kend(m)) \geq \W(\kend(1))$, and therefore the claim is shown. For the other direction as well as the inequality $\frac{1}{2} \geq \pob(f, n, k)$, we construct a game construction (also shown in Figure \ref{fig:2setworst}) such that for any utility rule $f$, $\pob(\G^f, k) \leq \frac{1}{2}$. The game $\G^f$ is constructed as follows with $2$ agents. Let the resource set be $\rr = \{r_1, r_2, r_3, r_4\}$  with $v_{r_1} = 1$, $v_{r_2} = 1$, $v_{r_3} = f(2)$, and $v_{r_4} = 0$. For agent $1$, let $\ac_1 = \{\emp_1, a^1_1, a^2_1\}$ with $a^1_1 = \{r_1\}$ and $a^2_1 = \{r_2\}$. For agent $2$, let $\ac_2 = \{\emp_2, a^1_2, a^2_2\}$ with $a^1_2 = \{r_3\}$ and $a^2_2 = \{r_1\}$. It can be seen in this game, if $f(2) \leq 1$, the optimal joint action is $(a^2_1, a^2_2)$ resulting in a welfare of $2$, and if $f(2) > 1$, the optimal joint action is $(a^2_1, a^1_2)$ resulting in a welfare of $1 + f(2) \geq 2$. Additionally, starting from $(\emp_1, \emp_2)$, note the best response path $((\emp_1, \emp_2), (a^1_1, \emp_2) (a^1_1, a^1_2), \dots, (a^1_1, a^1_2), (a^1_1, a^1_2), (a^1_1, a^2_2))$ exists, ending in the action $(a^1_1, a^2_2)$ with a welfare $1$. Since $(a^1_1, a^1_2)$ is a Nash equilibrium, both agents stay playing their first action $a^1$ for $k-1$ rounds and agent $2$ only switches to $a^2_2$ in the $k$-round. Therefore $\pob(\G^f, k) \leq \frac{1}{2}$ for this game. This game construction $\G^f$ can be extended to $n > 2$ by fixing the actions of agents $i \geq 2$ to be $\ac_i = \{\emp_i, a^1_i\}$, where the only nonempty action available to them is $a^1_i = \{r_4\}$ that selects the $0$ value resource.

Since $\G^f$ may not be the worst case game, $\pob(f, n, k) \leq \pob(\G^f, k) \leq \frac{1}{2}$ and the final claim is shown.
\end{proof}

\begin{figure}[ht]
    \centering
    \includegraphics[width=220pt]{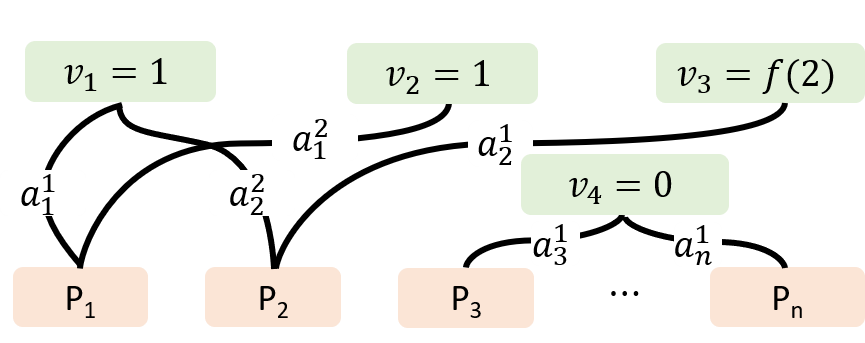}
    \caption{Under this $n$-agent game construction $\G^f$, outlined in Theorem \ref{thm:MCopt}'s proof, $\pob(\G^f, k) \leq \frac{1}{2}$ for any mechanism $f$. There are four resources $\rr = \{r_1, r_2, r_3, r_4\}$ with values $\{1, 1, f(2), 0\}$ respectively. Under the allocation $(a^2_1, a^2_2)$, the resources $r_1$, $r_2$, and $r_4$ are selected to produce a welfare of $2$. However for any number of rounds $k \geq 1$, there exists a best response path that ends in an allocation that selects only the resources $r_1$ and $r_4$ to produce a welfare of $1$. Then $\pob(f, n, k) \leq \pob(\G^f, k) \leq \frac{1}{2}$.}
    \label{fig:2setworst}
\end{figure}

% \begin{rem*}
% It is important to note that the efficiency guarantees we obtain for the $\pob$ are not a result of considering a specific trajectory $\sigma^K$, but rather emerges solely from the relevant game characteristics $f$ and $n$. In other words, there are instances of set covering games where every possible trajectory attains the efficiency guarantee dictated by $\pob(f, k)$. This is made formal below.
% \begin{prop}
% \label{prop:minmaxpob}
% For any utility rule $f$ that satisfies $f(2) \neq f(1)$, it must hold that
% \begin{equation}
%     \label{eq:pomb}
%     \pobset = \inf_{\G \in \Gset} \frac{\max_{a \in \kend(k)}{\W(a)}}{\max_{a \in \mathcal{A}}{\W(a)}}.
% \end{equation}
% \end{prop}
% \noindent Thus changing $\min$ to a $\max$ in \eqref{eq:pomb} does not affect the value of the expression for the price of best response and the efficiency results are independent of $\sigma^K$. The proof is omitted due to space concerns, but relies on a refined game construction where the worst case game $\G_w$ that achieves $\pob(\G_w) = \pob(f, n, k)$ is modified to another game $\G_w'$ that has a unique best response path from $\emp$ but that has efficiency arbitrarily close of the price of best response.
% \end{rem*}

Observe that Theorem \ref{thm:MCopt} answers Questions (a)--(c) posed in the previous section. Specifically, we have shown that for any number of rounds, the performance guarantees of $f_{\poa}$ \emph{never} overcome the performance guarantees of $f_{MC}$. Remarkably, for any fixed number of rounds $k$, $\pob(f_{MC}, k)$ remains greater than $\pob(f_{\poa}, k)$ and the advantages of using $f_{\poa}$ only appear when the system settles to a Nash equilibrium as $k\to\infty$. Furthermore, the marginal contribution rule is in fact the utility mechanism that optimizes the price of best response and that this optimal efficiency guarantee is achieved after $1$ round of best response. Thus, the transient guarantees of $f_{MC}$ are quite strong in comparison to $f_\poa$. Note that this is in stark contrast with their price of anarchy values, where $f_{\poa}$ significantly outperforms $f_{MC}$ \cite{gairing2009covering,ramaswamy2019multiagent}. Clearly, there exists a trade-off between short-term and long-term performance guarantees in this setting.

In our next theorem, we characterize the trade-off between short-term and long-term performance guarantees by providing explicit expressions for the Pareto curve between $\pob(f, 1)$ and $\poa$ and the corresponding Pareto-optimal utility mechanisms. Surprisingly, we show that $\pob(f_{\poa}, 1)$ can be arbitrarily bad.

%We have shown that $\pob(f_{\poa}, n, k)$ never goes above $\frac{1}{2}$. However, the question remains on how poorly does the short term performance of $f_{\poa}$ compare with $f_{MC}$. We highlight this difference by examining the trade-off between the $1$-round price of best response and the price of anarchy with different utility rule designs. 
% \vspace{5em}
\begin{thm}
\label{thm:poapobtradeoff}
For a fixed $\poa(f) = C \in [\frac{1}{2}, 1 - \frac{1}{e}]$, the optimal $\pob(f, 1)$ is
\begin{equation}
\label{eq:scpoba}
\small \max_f \pob(f, 1) = \Bigg[ \sum^\infty_{j=0} \max \{ j! (1 - \frac{1 - C}{C} \sum^j_{\tau=1}{\frac{1}{\tau!}}) , 0\}+1 \Bigg]^{-1}.
\end{equation}
Furthermore, if $\poa(f) =  1 - \frac{1}{e}$ for a given $f$, then $\pob(f, 1)$ must be equal to $0$.
\end{thm}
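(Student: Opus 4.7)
My plan is to cast $\pob(f,1)$ as the optimal value of a linear program, in the spirit of the smoothness / LP-duality framework of \cite{paccagnan2018utility,bilo2018unifying}, and then to augment that program with the affine constraint $\poa(f)=C$ so that $\max_f \pob(f,1)$ becomes a single jointly linear program whose optimal value is the right-hand side of \eqref{eq:scpoba}. First, I would record the $1$-round best-response inequality agent by agent: when the agents move in order $1,2,\dots,n$ starting from $\emp$, the tight condition at agent $i$ is $\U(\abr_i,\abr_{<i},\emp_{>i}) \geq \U(\opt_i,\abr_{<i},\emp_{>i})$. Summing over $i$ and using the set-covering decomposition $\U(a_i,a_{-i})=\sum_{r\in a_i} v_r f(|a|_r)$, this collapses per resource $r$ into
\begin{equation*}
v_r \sum_{s=1}^{|\babr|_r}\! f(s) \;\geq\; v_r \!\!\sum_{j:\, r\in\opt_j}\! f\!\left(|\babr|_r^{<j}+1\right),
\end{equation*}
where $|\babr|_r^{<j}$ counts agents preceding $j$ that select $r$ in $\babr$.

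Next I would parameterize worst-case instances by assigning each resource a \emph{type}, specifying how many agents pick it in $\baopt$, how many pick it in $\babr$, and the interleaving of the two sets in the move order. Because both the welfare ratio and all the BR/NE inequalities are linear in the resource values $v_r$, collapsing identical-type resources into non-negative weights $\theta_\tau$ turns the worst-case game into a finite LP: minimize $\sum_\tau \theta_\tau \mathbf{1}[\text{covered in }\babr]$ subject to $\sum_\tau \theta_\tau \mathbf{1}[\text{covered in }\baopt]=1$, the aggregated 1-BR inequality above, and the analogous NE inequality that encodes $\poa(f)=C$. After the standard reparameterization $g(j)=(j-1)!\,f(j)$ of \cite{paccagnan2018utility}, the simultaneous choice of $f$ and $\theta$ becomes a single LP.

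I would then attack this joint LP by forming its dual, conjecturing the active types, and verifying primal--dual optimality. The form of \eqref{eq:scpoba} strongly suggests that the only tight types are singletons $(1,j)$, meaning one agent selects the resource in $\baopt$ and $j$ agents select it in $\babr$ with the $\baopt$-agent placed first, for $j=0,1,2,\dots$; the $\max\{\cdot,0\}$ truncation reflects the sign constraint $f(j)\geq 0$ becoming active beyond some threshold. The factorial $j!$ and the partial sums $\sum_{\tau=1}^j 1/\tau!$ then emerge from solving the recursion that complementary slackness imposes on $f$, with the base case pinned by $\poa(f)=C$. The main obstacle I expect is verifying that a matching primal instance actually realizes this ratio: because the BR inequalities couple different multiplicities through ordering-dependent counts, constructing the explicit worst-case covering game and checking feasibility of the claimed extreme point is the technical core of the argument.

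For the ``moreover'' clause, substitute $C=1-1/e$ into \eqref{eq:scpoba}. Then $(1-C)/C=1/(e-1)$ and
\begin{equation*}
1-\tfrac{1}{e-1}\sum_{\tau=1}^{j}\tfrac{1}{\tau!} \;=\; \tfrac{1}{e-1}\sum_{\tau\geq j+1}\tfrac{1}{\tau!}>0,
\end{equation*}
so the $j$-th summand equals $\tfrac{1}{e-1}\sum_{\tau\geq j+1}\tfrac{j!}{\tau!}\geq \tfrac{1}{(e-1)(j+1)}$. The sum over $j\geq 0$ diverges, hence the denominator in \eqref{eq:scpoba} is infinite and $\max_f \pob(f,1)=0$. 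Since $\pob\geq 0$, every $f$ with $\poa(f)=1-1/e$ must satisfy $\pob(f,1)=0$.
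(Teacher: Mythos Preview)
Your overall direction---LP duality over resource types, smoothness-style aggregation, then optimization over $f$---matches the paper's, and your argument for the ``moreover'' clause (harmonic divergence of $\sum_j \tfrac{1}{(e-1)(j+1)}$) is essentially identical to the paper's. The core of the main argument, however, has a real gap.

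The assertion that ``the simultaneous choice of $f$ and $\theta$ becomes a single LP'' is not justified: the BR and NE constraints carry products $\theta_\tau f(j)$, so the joint program is bilinear, and the reparameterization $g(j)=(j-1)!\,f(j)$ does nothing to remove those cross terms. The paper avoids this by decoupling the two levels. First (Lemma~\ref{lem:setpob}), for \emph{fixed} $f$ it solves the PoB dual in closed form, $1/\pob(f,n,1)=\bigl(\sum_{j=1}^n f(j)-\min_j f(j)\bigr)/f(1)+1$; the key step is proving that the per-agent dual multipliers $\lambda_i$ all equal $1/f(1)$, which is exactly what would license your ``sum over $i$'' aggregation but which you assume rather than establish. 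Second (Lemma~\ref{lem:implicitfboundar}), it imports the closed-form $\poa$ expression from \cite{paccagnan2018utility}, shows a Pareto-optimal $f$ must be non-increasing, and argues that every constraint $jf(j)-f(j+1)\leq \ipoa$ binds, yielding the recursion $f^{\ipoa}(j{+}1)=\max\{jf^{\ipoa}(j)-\ipoa,0\}$; solving it and substituting into the PoB formula gives \eqref{eq:scpoba}. Your ``complementary slackness on the joint LP'' narrative does not produce this recursion without first isolating the $\poa$ side, and your conjectured active-type set $\{(1,j):j\geq 0\}$ does not match what the paper actually finds---the only binding dual constraints are the single type with $\babr(\bar r)=\p$, $\baopt(\bar r)=\{\arg\min_j f(j)\}$, together with the types having $\babr(r)=\varnothing$. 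The infinite sum in \eqref{eq:scpoba} comes from expanding $\sum_j f^{\ipoa}(j)$ as $n\to\infty$, not from infinitely many active resource types.
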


The price of anarchy is only taken on the interval $[\frac{1}{2}, 1 - \frac{1}{e}]$, since the greatest achievable $\poa(f)$ is $1 - \frac{1}{e}$ \cite{gairing2009covering} and the price of anarchy of $f_{MC}$ is $\frac{1}{2}$ (which attains the optimal $\pob(f, 1)$ as shown in Theorem \ref{thm:MCopt}). The corresponding Pareto-optimal utility rules are given in Lemma \ref{lem:implicitfboundar}. The optimal trade-off curve between $\poa$ and $\pob$ is plotted in Figure \ref{fig:poapobcurve}.

\begin{figure}[ht]
    \centering
    \includegraphics[width=220pt]{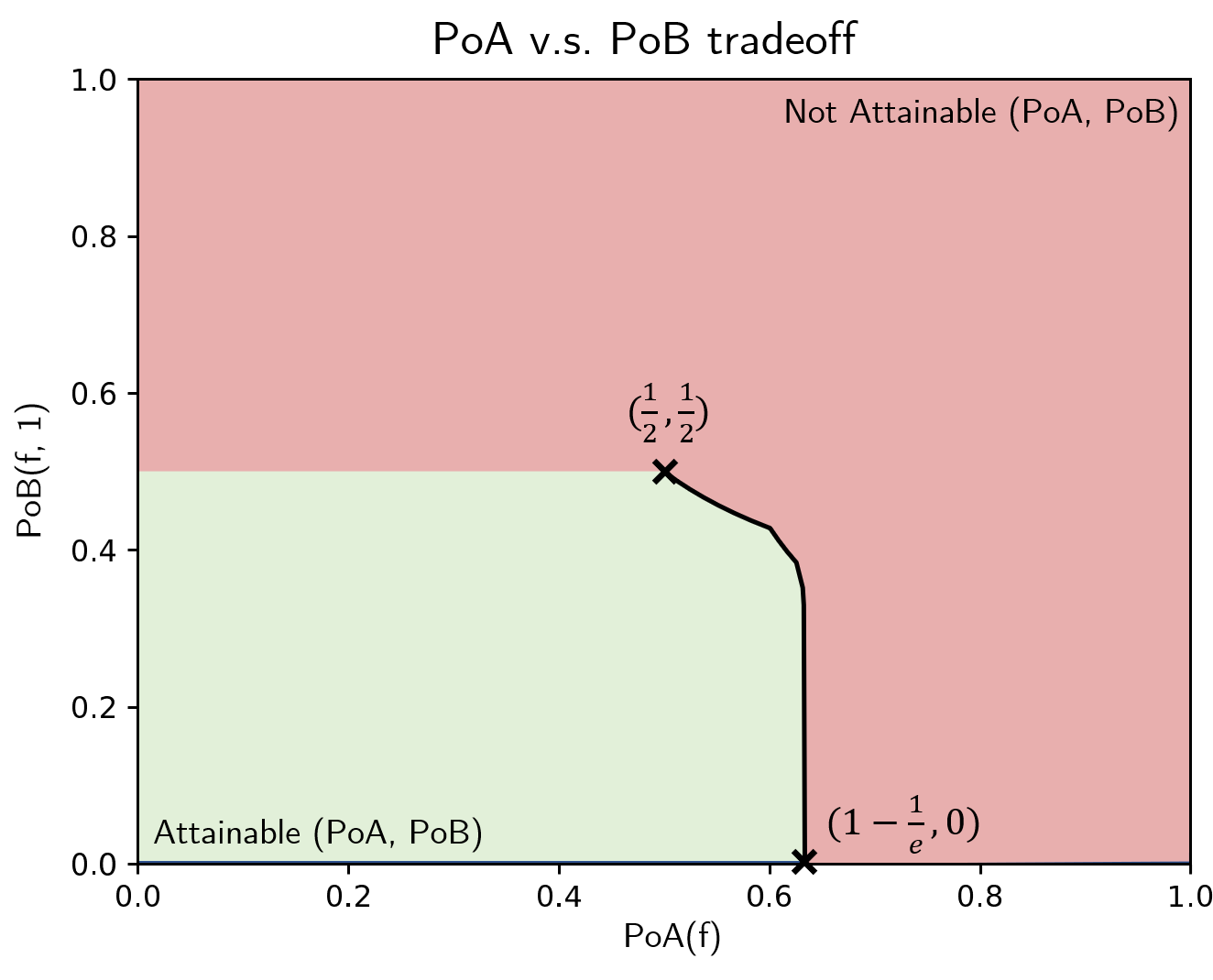}
    \caption{All possible pairs $(\poa(f), \pob(f, 1)) \in [0, 1] \times [0, 1]$ are depicted in this figure. The attainable set where there exists a mechanism $f$ that can achieve either a greater $\pob(f, 1)$ or $\poa(f)$ is depicted in green and the complement is depicted in red. The Pareto curve that results from the optimal trade-off described in Theorem \ref{thm:poapobtradeoff} is depicted by the black curve. We observe that there is an extreme drop in transient performance to $\pob(f, 1)=0$ when using any utility rule that optimizes $\poa(f)$.}
    \label{fig:poapobcurve}
\end{figure}

Theorem \ref{thm:poapobtradeoff} highlights the disparity between the efficiency guarantees of the asymptotic and transient values, quantified by the pair of achievable $\poa$ and $\pob$. While there is a decline in $\poa$ guarantees when using the MC rule, the deterioration in $\pob$ that results from using the mechanism $f_{\poa}$ that optimizes $\poa$ is much more extreme. In fact, the transient performance of $f_{\poa}$ can be arbitrarily bad. This is quite surprising, it appears that a utility rule's asymptotic performance guarantees do not reflect on its transient guarantees. This extreme trade-off has significant consequences on distributed design, especially when the short-term performance of the multi-agent system is critical, and prompts a more careful interpretation of the asymptotic results in such scenarios.

\section{Proof of Theorem \ref{thm:poapobtradeoff}}
\label{sec:proofthm}
In this section, we prove the result in Theorem \ref{thm:poapobtradeoff}. First, we characterize the $1$-round price of best response for any utility rule $f$ in the next lemma. The proof of this lemma is based on a worst case game construction using a smoothness argument. The worst case game construction is well structured and is shown in Figure \ref{fig:worst1}.

\begin{lemma}
\label{lem:setpob}
Given a utility rule $f$ and $n$ number of agents, the $1$-round price of best response is
\begin{equation}
    \label{eq:pobwscf}
    \frac{1}{\pob(f, n, 1)} = \Big( \sum_{j=1}^{n}{f(j)} - \min_j{f(j)} \Big) / f(1) + 1.
\end{equation}
\end{lemma}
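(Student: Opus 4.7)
The plan is to establish matching lower and upper bounds on $1/\pob(f,n,1)$: a smoothness-style argument for the lower bound and an explicit worst-case construction (generalizing Figure~\ref{fig:2setworst}) for the matching upper bound.

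For the lower bound on $\pob$, let $\babr=(\abr_1,\dots,\abr_n)$ be any end action of the 1-round best response starting at $\emp$, let $\baopt=(\opt_1,\dots,\opt_n)$ be welfare-maximizing, and let $b_i=(\abr_1,\dots,\abr_{i-1},\emp_i,\dots,\emp_n)$ denote the state just before agent $i$ acts. I will sum the $n$ best-response inequalities $\U(\abr_i,b_{i,-i})\geq \U(\opt_i,b_{i,-i})$. Tracking how the per-resource cover count grows across the round yields the clean identity $\sum_i \U(\abr_i,b_{i,-i})=\sum_r v_r\sum_{j=1}^{|\babr|_r}f(j)$, which I bound above by $\W(\babr)\sum_{j=1}^n f(j)$ using $f\geq 0$. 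On the other side I split each term $v_r f(|b_i|_r+1)$ for $r\in \opt_i$ into two cases: if $r$ is uncovered in $\babr$ then $|b_i|_r=0$ and the term equals exactly $v_r f(1)$; otherwise I lower bound $f(|b_i|_r+1)$ by $\min_j f(j)$ (with $j\in\{1,\dots,n\}$). Since every resource in $\bigcup_i\opt_i$ is selected by at least one agent in $\baopt$, summing gives $\sum_i\U(\opt_i,b_{i,-i}) \geq f(1)[\W(\baopt)-\W_{\cap}]+\min_j f(j)\cdot \W_{\cap}$, where $\W_{\cap}$ is the total value of resources covered by both $\babr$ and $\baopt$. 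Chaining both bounds and using $\W_{\cap}\leq\W(\babr)$ rearranges into $\pob(f,n,1)\geq f(1)/[f(1)+\sum_{j=1}^n f(j)-\min_j f(j)]$.

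For the matching upper bound, fix any $j^*\in\arg\min_j f(j)$ and introduce resources $\{r^*\}\cup\{r_i : i\in\{1,\dots,n\}\setminus\{j^*\}\}$ with $v_{r^*}=1$ and $v_{r_i}=f(i)/f(1)$ for $i\neq j^*$, together with action sets $\ac_{j^*}=\{\emp_{j^*},\{r^*\}\}$ and $\ac_i=\{\emp_i,\{r^*\},\{r_i\}\}$ for $i\neq j^*$. When agent $i$ acts, exactly $i-1$ predecessors have already chosen $\{r^*\}$, so $\{r^*\}$ yields utility $f(i)$ and $\{r_i\}$ yields $v_{r_i}f(1)=f(i)$; the tie permits $\{r^*\}$ as a valid best response, producing an end state in which every agent covers only $r^*$ and $\W(\babr)=1$. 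The welfare optimum assigns $r^*$ to agent $j^*$ and $r_i$ to each other agent, giving $\W(\baopt)=1+\sum_{i\neq j^*}f(i)/f(1)$. The resulting welfare ratio equals $f(1)/[f(1)+\sum_j f(j)-\min_j f(j)]$, which establishes the matching upper bound and therefore equality.

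The main delicate points will be the case split on the right-hand side of the smoothness chain and the tie-handling in the worst-case game. In the smoothness step, the $-\min_j f(j)$ correction comes from not discarding contributions of opt-resources that are already covered by $\babr$; this must be carefully tracked because a naive bound would give only $\pob\geq f(1)/[f(1)+\sum_j f(j)]$, which is strictly weaker. In the construction every best-response inequality holds with equality, so I will appeal to $\br$ being set-valued to justify the adversarial tie-breaker. Edge cases, such as $j^*=1$ where the special agent's action set naturally reduces to $\{\emp_{j^*},\{r^*\}\}$, and $f$ having multiple minimizers, follow from the same argument without modification.
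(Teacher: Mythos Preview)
Your proposal is correct. The worst-case construction you give is essentially the same as the paper's (the paper's Figure~\ref{fig:worst1} simply fixes $j^*=n$), and your tie-breaking justification via the set-valued $\br$ matches the paper's treatment.

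The difference is entirely in the lower-bound direction. The paper formulates the search for the worst game as a linear program over resource values indexed by the pair $(\babr(r),\baopt(r))$, passes to the dual, and then argues by hand which dual constraints bind; the optimal dual multipliers turn out to be $\lambda_j=1/f(1)$ for all $j$, and the binding primal resource is exactly your $r^*$ with $\babr(r^*)=\p$ and $\baopt(r^*)=\{j^*\}$. Your smoothness argument bypasses the LP entirely: summing the $n$ best-response inequalities and using the telescoping identity $\sum_i \U(\abr_i,b_{i,-i})=\sum_r v_r\sum_{j=1}^{|\babr|_r}f(j)$ is precisely the primal certificate that the paper's dual variables encode, but you never have to name them. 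The case split on whether an opt-resource is already covered by $\babr$, which you correctly flag as the delicate point producing the $-\min_j f(j)$ correction, corresponds in the LP proof to the observation that the binding constraint has $|\baopt(\bar r)|=1$ rather than $0$. Your route is more elementary and self-contained; the paper's LP framework is heavier here but is the same machinery they reuse for the $\poa$ side of the trade-off in Lemma~\ref{lem:implicitfboundar}, so it buys uniformity across the two metrics.
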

\begin{proof}
To characterize $\pob(f, n, 1)$, we first formulate it as an optimization program, where we search over all possible set covering games with a fixed $f$ and $n$ for the worst $1$-round outcome. We give a proof sketch of the correctness of the proposed optimization program in \eqref{eq:dualpobsc}. Without loss of generality, we can only consider games where each agent $i$ has $2$ actions $\opt_i$ and $\abr_i$, in which $\opt$ is the action that optimizes the welfare $\W$ and $\abr$ is the action that results from a $1$-round best response. Informally, the optimization program involves searching for a set covering game that maximizes $\W(\opt)$ with the constraints $\W(\abr) = 1$ and that $\abr$ is indeed the state that is achieved after a $1$-round best response. Since the welfare $\W(a)$ as well as $\U(a)$ can be described fully by $v_r$ and $f$, each game is parametrized through the resource values $v_r$ from a common resource set $\rr^*$. Let 
\begin{equation}
    \babr(r) = \{i \in \p : r \in \abr_i \}
\end{equation}
denote the set of agents that select the resource $r$ in $\abr$. We define $\baopt(r)$ similarly. Additionally, we can define $\babr^{<i}(r) = \{1 \leq j < i  : r \in \abr_j \}$ to denote the set of agents $\{1, \dots, i-1\}$ that select the resource $r$ in in $\abr_i$. The common resource set $\rr^*$ is comprised of all the resources $r$ that have unique signature from $\babr(r)$ and $\baopt(r)$. Taking the dual of the program in the previous outline results in the following program.
\begin{align}
    \pob(f, n, 1)^{-1} &= \min_{\{\lambda_i\}_{i \in \p}, \mu} \mu \text{ s.t.} \label{eq:dualpobsc}\\
    \label{eq:dualpobcson}
    \mu \min(|\ba_{\mathrm{br}}(r)|, 1) &\geq \min(|\ba_{\mathrm{opt}}(r)|, 1) + \\ 
    \medmath{\sum_{i \in \p} \lambda_i \Big[ \big(\mathbf{1}_{\babr(r)}(i)} & \medmath{- \mathbf{1}_{\baopt(r)}(i)\big) f(|\babr^{<i}(r)| + 1) \Big] \geq 0} \ \forall r \in \rr^* \nonumber
\end{align}
Now we reduce the constraints of this program to give a closed form expression of $\pobset$.
We first examine the constraints in \eqref{eq:dualpobcson} associated with $r \in \rr^*$ with $\babr(r) = \varnothing$. Simplifying the corresponding constraints gives
\begin{equation}
\label{eq:dualsetnullbr}
\sum_{j \in P} \lambda_j f(1) \geq 1
\end{equation}
for any subset of agents $P \subset \p$. Only considering $P = \{j\}$ gives the set of binding constraints $\lambda_j \geq 1/f(1)$ for any agent $j$.

Now we consider the converse constraints in \eqref{eq:dualpobcson} where $\babr(r) \neq \varnothing$. For this set of constraints, the binding constraint is postulated to be 
\begin{equation}
\label{eq:dualsetnotnullbr}
\mu \geq \Big( \sum_{j=1}^{n}{ \lambda_j f(j)} - \min_j{f(j)} \Big) + 1
\end{equation}
corresponding to the resource $\bar{r}$ with $\babr(\bar{r})  = \p$ and $\baopt(\bar{r}) = \{\arg \min_j f(j)\}$. Under this assumption, the optimal dual variables $\lambda_j$ are $\lambda_j = 1/f(1) \ \forall j$. This follows from the constraint in \eqref{eq:dualsetnullbr} and the fact that decreasing any $\lambda_j$ results in a lower binding constraint in \eqref{eq:dualsetnotnullbr}.

Now we can verify that the constraint in \eqref{eq:dualsetnotnullbr} with $\babr(\bar{r})  = \p$ and $\baopt(\bar{r}) = \{\arg \min_j f(j)\}$ is indeed the binding constraint under the dual variables  $\lambda_j = 1/f(1)$. The resulting dual program assuming $\babr(r) \neq \varnothing$ is
\begin{align}
    &\min_{\mu} \mu \text{ s.t.} \\
    \label{eq:dualbrgeq1}
    \mu &\geq \sum_j \big(\mathbf{1}_{\babr(r)}(j) - \mathbf{1}_{\baopt(r)}(j) \big) \\
    &f(|\babr^{<j}(r)|+1)/f(1) + \min (|\baopt(r)|, 1) \ \forall r. \nonumber
\end{align}
Since any valid utility rule has $f(j) \geq 0$ for any $j$, the terms in \eqref{eq:dualbrgeq1} is maximized if $\mathbf{1}_{\babr(r)}(j) = 1$ for all $j$ which implies that $\babr(r) = \p$. Also note that $|\baopt(r)| \leq 1$, as 
\begin{equation}
    \label{eq:dualoptcons}
    \min (|\baopt(r)|, 1) - \sum_j \mathbf{1}_{\baopt(r)}(j) \big) f(|\babr^{<j}(r)|+1)/f(1)
\end{equation}
is less binding if $|\baopt(r)| > 1$. Since $\babr(r) = \p$ for the binding constraint, if $|\baopt(r)| = 1$, the terms in \eqref{eq:dualoptcons} reduces to $1 - f(j)/f(1)$ for some $j$. Similarly, if $|\baopt(r)| = 0$, the terms in \eqref{eq:dualoptcons} reduces to $0$. Note that $ 1 - f(1)/f(1) \geq 0$, and so $1 - \min_j \{f(j)\}/f(1) \geq 0$ as well. Thus $|\baopt(r)| = 1$ is binding. Taking any $\baopt(\bar{r}) \in \arg \min_j f(j)$ attains the maximum value for $1 - \min_j \{f(j)\}/f(1)$, and so we have shown the claim that the binding constraint for $\mu$ is in equation \eqref{eq:dualsetnotnullbr}. Taking the binding constraint with equality gives
\begin{equation*}
    \mu = \Big( \sum_{j=1}^{n}{f(j)}/f(1) - \min_j{f(j)} \Big) + 1,
\end{equation*}
giving the expression for $\pob$ as stated.
\end{proof}

\begin{figure}[ht]
    \centering
    \includegraphics[width=220pt]{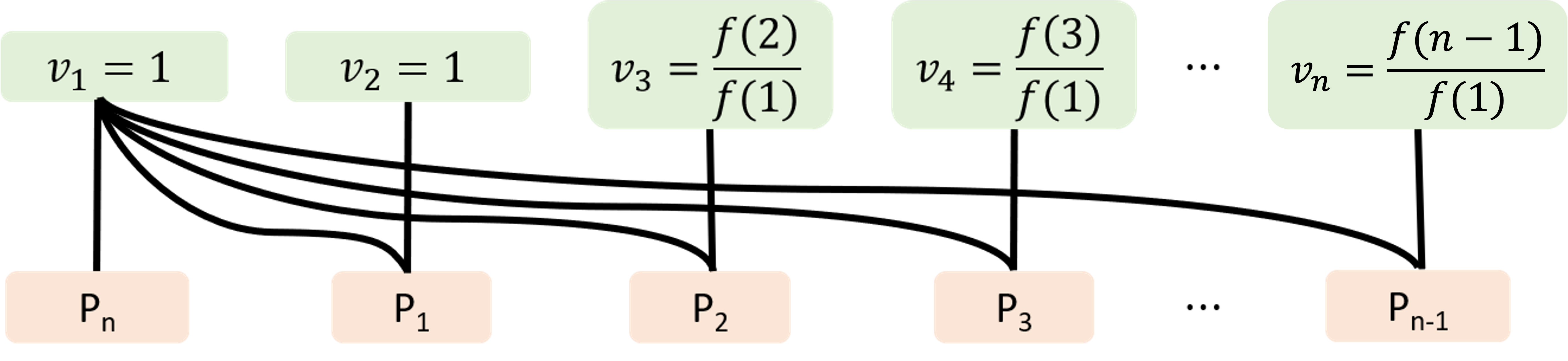}
    \caption{In this worst case game example, we assume that $f(n) = \min_j{f(j)}$. Then in a $1$-round best response, each agent can choose $r_1$ with $v_1 = 1$ successively, resulting in a welfare of $1$. When agent $i < n$ moves, note that if the previous $i-1$ agents have selected $r_1$ , agent $i$ is indifferent towards selecting the $r_1$ or $r_{i+1}$ resource. Agent $n$ can only select the resource $r_1$. The optimal allocation is when all agents select different resources, resulting in a welfare that is reported in Equation \eqref{eq:pobwscf}.}
    \label{fig:worst1}
\end{figure}

To characterize the trade-off, we now provide an explicit expression of Pareto optimal utility rules, i.e., utility rules $f$ that satisfy either $\pob(f,1)\geq\pob(f',1)$ or $\poa(f)\geq\poa(f')$ for all $f'\neq f$.

\begin{lemma} \label{lem:implicitfboundar}
For a given $\ipoa \geq 0$, a utility rule $f$ that satisfies $\poa(f)\geq 1/(1+\ipoa)$ while maximizing $\pob(f,1)$ is defined as in the following recursive formula:
\begin{align}
    f^{\ipoa}(1) &= 1 \\
    \label{eq:recurf}
    f^{\ipoa}(j+1) &= \max\{j f^{\ipoa}(j) - \ipoa, 0\}.
\end{align}
\end{lemma}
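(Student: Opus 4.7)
The plan is to reduce the Pareto-optimality question to a pointwise greedy minimization on the sequence $f(1), f(2), \dots$. First, I would invoke Lemma \ref{lem:setpob}: since scaling $f$ by any positive constant leaves each agent's preferences (and hence the best-response set and the Nash equilibrium set) unchanged, one can normalize $f(1) = 1$ without loss of generality. Under this normalization the formula in Lemma \ref{lem:setpob} says that maximizing $\pob(f, 1)$ is equivalent to minimizing $S(f) := \sum_{j \geq 1} f(j) - \min_j f(j)$, and the problem becomes that of choosing the sequence $\{f(j)\}_{j \geq 2}$ as small as possible subject to the $\poa$ constraint.

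Second, I would translate the constraint $\poa(f) \geq 1/(1+\ipoa)$ into a family of pointwise linear inequalities on $\{f(j)\}$. Using the standard LP characterization of the price of anarchy for set covering games (analogous to the dual derivation in the proof of Lemma \ref{lem:setpob} but replacing the best-response round condition by the Nash deviation condition), the binding resource-level constraint reduces to $j f(j) - f(j+1) \leq \ipoa \cdot f(1) = \ipoa$ for each $j \geq 1$. Combined with the non-negativity $f(j+1) \geq 0$ inherent to any valid utility rule, this yields the pointwise lower bound $f(j+1) \geq \max\{\, j f(j) - \ipoa,\ 0 \,\}$. A sanity check: for $\ipoa = 1/(e-1)$ the recursion produces $f_\poa$, and for $\ipoa = 1$ it collapses to $f_{MC}$, which is consistent with the known $\poa$ values of these two rules.

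Third, I would argue by induction on $j$ that matching this lower bound greedily minimizes each $f(j)$ simultaneously. The key monotonicity observation is that the right-hand side $j f(j) - \ipoa$ is non-decreasing in $f(j)$, so reducing $f(j)$ can only weaken the downstream constraint on $f(j+1)$, and creates no offsetting tightening elsewhere. Hence the greedy assignment $f^{\ipoa}(j+1) = \max\{ j f^{\ipoa}(j) - \ipoa,\ 0 \}$ minimizes each $f(j)$ pointwise and therefore minimizes $\sum_j f(j)$. Once the recursion reaches zero it stays at zero, so $\min_j f^{\ipoa}(j) = 0$, confirming that the $-\min_j f(j)$ term in $S(f)$ is also handled optimally by the same greedy choice.

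The main obstacle will be rigorously justifying the second step: rederiving the characterization $\poa(f) = f(1) / \bigl( f(1) + \max_{j \geq 1} [\, j f(j) - f(j+1) \,] \bigr)$ for set covering games. Although this essentially follows from the smoothness-type analysis of \cite{gairing2009covering}, carrying it out in the present notation requires an LP-duality argument parallel to that of Lemma \ref{lem:setpob}, together with a careful identification of the binding constraint as the one corresponding to a resource covered by $j$ equilibrium players and a single deviator. Once this characterization is in hand, the remaining greedy and induction steps are routine.
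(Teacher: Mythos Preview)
Your overall strategy matches the paper's: normalize $f(1)=1$, translate the $\poa$ constraint into a recursive lower bound on $f(j+1)$, and then argue by induction that saturating this bound pointwise minimizes $\sum_j f(j)$ and hence maximizes $\pob(f,1)$. The induction/greedy step and the handling of the $-\min_j f(j)$ term are essentially what the paper does.

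The gap is in your second step. The formula you propose,
\[
\poa(f) \;=\; \frac{f(1)}{f(1) + \max_{j \geq 1}\bigl[\, j f(j) - f(j+1)\, \bigr]},
\]
is not the full $\poa$ characterization for arbitrary $f$; it is the simplified form that holds only once $f$ is known to be non-increasing. The general characterization (Theorem~2 of \cite{paccagnan2018utility}, as invoked in the paper) carries additional terms of the form $(j{+}1)f(j{+}1)-f(1)$ and a boundary term $(n{-}1)f(n)$, and for a non-monotone $f$ these can be the binding constraints. Consequently your lower bound $f(j+1)\geq jf(j)-\ipoa$ does not by itself encode the constraint $\poa(f)\geq 1/(1+\ipoa)$, and the greedy minimizer of the partial constraint set need not be feasible for the full one.

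The paper closes this gap with an extra argument that you are missing: before invoking the simplified formula, it shows via a swap argument that any Pareto-optimal rule must be non-increasing. Specifically, if $f(j')<f(j'+1)$ for some $j'$, interchanging these two values leaves $\pob(f,1)$ unchanged (immediate from the expression in Lemma~\ref{lem:setpob}) while strictly loosening every $\poa$ constraint that involves $f(j')$ or $f(j'+1)$, contradicting Pareto optimality. Only after this reduction does the three-term max collapse to the single term $jf(j)-f(j+1)$ (plus the boundary term), and your greedy recursion becomes valid. You should either insert this monotonicity step before your second step, or carry the full three-term characterization through the induction and verify afterward that $f^{\ipoa}$ is non-increasing and satisfies the remaining constraints.
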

\begin{proof}
According to a modified version of Theorem $2$ in \cite{paccagnan2018utility}, the price of anarchy ($\poa$) can be written as 
\begin{align}
\label{eq:poasetgen}
    \frac{1}{\poa(f)} &= 1 + \Big( \max_{1 \leq j \leq n - 1} \{(j+1) f(j+1) - f(1), \nonumber \\
    &jf(j)-f(j+1), jf(j+1)\} \Big)/f(1).
\end{align}

We first show that if $f$ is Pareto optimal, then it must also be non-increasing. Otherwise, we show that another $f$ exists that achieves at least the same $\pob$, but a higher $\poa$, contradicting our assumption that $f$ is Pareto optimal. Assume, by contradiction, that there exists $f$ that is Pareto optimal and decreasing, i.e., there exists a $j' \geq 1$, in which $f(j') < f(j'+1)$. Notice that switching the value $f(j')$ with $f(j'+1)$ results in an unchanged $\pob$ according to \eqref{eq:pobwscf} in Lemma \ref{lem:setpob} if $j' > 1$. We show that $f'$ with the values switched has a higher $\poa$ than $f$.

For any $1 \leq j' \leq n - 1$, the expressions from \eqref{eq:poasetgen} that include $f(j')$ or $f(j'+1)$ are
\begin{align*}
    (j' f(j') - f(1)))/f(1) + 1, \\
    ((j'+1) f(j'+1) - f(1))/f(1) + 1, \\
    ((j'-1) f(j'-1) - f(j'))/f(1) + 1, \\
    (j' f(j') - f(j'+1))/f(1) + 1, \\
    ((j'+1) f(j'+1) - f(j'+2))/f(1) + 1, \\
    ((j'-1) f(j'))/f(1) + 1, \\
    (j' f(j'+1))/f(1) + 1,
\end{align*}
After switching, the relevant expressions for $f'$ are
\begin{align*}
    (j' f(j'+1) - f(1)))/f(1) + 1, \\
    ((j'+1) f(j') - f(1))/f(1) + 1, \\
    ((j'-1) f(j'-1) - f(j'+1))/f(1) + 1, \\
    (j' f(j'+1) - f(j'))/f(1) + 1, \\
    ((j'+1) f(j') - f(j'+2))/f(1) + 1, \\
    ((j'-1) f(j'+1))/f(1) + 1, \\ 
    (j' f(j'))/f(1) + 1.
\end{align*}
Since $f(j') < f(j'+1)$, switching the values results in a strictly looser set of constraints, and the value of the binding constraint in \eqref{eq:poasetgen} for $f'$ is at most the value of the binding constraint for $f$. Therefore $\poa(f) \leq \poa(f')$. Note that if $j'= 1$, then $\pob(f') > \pob(f)$ as well. This contradicts our assumption that $f$ is Pareto optimal.

Now we restrict our focus to non-increasing $f$ with $f(1) = 1$. Observe that for any $f$, the transformation $f'(j) = f(j)/C$, $C>0$, doesn't change the price of best response and price of anarchy. Thus, we can take $f(1) = 1$ without loss of generality. Given a $f$ that satisfies the above constraints, the price of anarchy is
\begin{equation}
    \frac{1}{\poa(f)} = 1 + \\ \max_{1 \leq j \leq n - 1} \{jf(j)-f(j+1), (n-1)f(n)\},
\end{equation}
as detailed in Corollary $2$ in \cite{paccagnan2018utility}. Let 
\begin{equation}
\label{eq:chiconst}
    \ipoa_f = \max_{1 \leq j \leq n - 1} \{jf(j)-f(j+1), (n-1)f(n)\}.
\end{equation} 
For $f$ to be Pareto optimal, we claim that $\ipoa_{f} = jf(j)-f(j+1)$ must hold for all $j$. Consider any other $f'$ with $\ipoa_f = \ipoa_{f'}$. It follows that $\poa(f) = \poa(f') = 1/(1+\ipoa_f)$. By induction, we show that $f(j) \leq f'(j)$ for all $j$. The base case is satisfied, as $1 = f(1) \leq f'(1) = 1$. Under the assumption $f(j) \leq f'(j)$, we also have that 
\begin{equation}
    jf(j) - \ipoa_f = f(j+1) \leq f'(j+1) = jf(j) - \ipoa^j_f,
\end{equation}
where $\ipoa^j_f = jf(j) - f(j+1) \leq \ipoa_f$ by definition in \eqref{eq:chiconst}, and so $f(j) \leq f'(j)$ for all $j$. Therefore the summation $\sum_{j=1}^{n}{f(j)} - \min_j{f(j)}$ in \eqref{eq:pobwscf} is diminished and $\pob(f) \geq \pob(f')$, proving our claim. As $f$ must satisfy $f(j) \geq 0$ for all $j$ to be a valid utility rule, $f(j+1)$ is set to be $\max \{jf(j) - \ipoa, 0\}$. Then we get the recursive definition for the maximal $f^{\ipoa}$. Finally, we note that for infinite $n$, $\ipoa \leq \frac{1}{e-1}$ is not achievable, as shown in \cite{gairing2009covering}.
\end{proof}

With the two previous lemmas, we can move to showing the result detailed in Theorem \ref{thm:poapobtradeoff}.

\begin{proof}[Proof of Theorem \ref{thm:poapobtradeoff}]
We first characterize a closed form expression of the maximal utility rule $f^{\ipoa}$, which is given in Lemma \ref{lem:implicitfboundar}. We fix $\ipoa$ so that $\poa(f^{\ipoa}) = \frac{1}{\ipoa+1}$ = $C$. To calculate the expression for $f^{\ipoa}$ for a given $\ipoa$, a corresponding time varying discrete time system to \eqref{eq:recurf} is constructed as follows.
\begin{align}
    x[m+1] &= m x[m] - u[m], \\
    y[m] &= \max \{x[m], 0\}, \\
    u[m] &= \ipoa, \\
    x[1] &= 1,
\end{align}
where $y[m]$ corresponds to $f^{\ipoa}(j)$. Solving for the explicit expression for $y[m]$ gives
\begin{align}
    y[1] &= 1 \\
    y[m] &= \max \Big[ \prod_{\ell=1}^{m-1} \ell - \ipoa \big( \sum_{\tau=1}^{m-2} \prod_{\ell=\tau+1}^{m-1} \ell \big) - \ipoa , 0 \Big] \ \ m > 1.
\end{align}
Simplifying the expression and substituting for $f^\ipoa(j)$ gives 
\begin{equation}
    f(j) = \max \Big[ (j-1)!(1 - \ipoa\sum_{\tau=1}^{j-1} \frac{1}{\tau!} \big), 0 \Big] \ \ \ j \geq 1.
\end{equation}

Substituting the expression for the maximal $f$ into \eqref{eq:pobwscf} gives the $\pob$. Notice that for $\ipoa \geq \frac{1}{e-1}$, $\lim_{j \to \infty} f(j) = 0$, and therefore $\min_j{f(j)} = 0$. Shifting variables $j' = j + 1$, we get the statement in \eqref{eq:scpoba}.

Finally, we prove that if $\poa(f) =  1 - \frac{1}{e}$ for a given $f$, then $\pob(f, 1)$ must be equal to $0$. According to \eqref{eq:scpoba}, among utility rules $f$ with $\poa(f) = 1 - \frac{1}{e}$, the best achievable $\pob(f, 1)$ can be written as
\begin{align*}
  \max_f \pob(f, 1) & =  \Big( \sum_{j=0}^{\infty}{\frac{j! \sum_{\tau = j+1}^{\infty}{\frac{1}{\tau!}}}{e-1}} + 1 \Big)^{-1} \\
%   =\> & \big( \frac{1}{e-1} \sum_{j=0}^{\infty}{\big[ \frac{1}{j+1} + j! \sum_{\tau = j+2}^{\infty}{\frac{1}{\tau!}} \big]} + 1 \big)^{-1} 
& \leq \Big( \frac{1}{e-1} \sum_{j=0}^{\infty}{\frac{1}{j+1}} + 1 \Big)^{-1} \leq 0
\end{align*}
Since $0 \leq \pob( f, 1) \leq   \max_f \pob(f, 1) \leq 0$, it follows that $\pob(f, 1)=0$.
\end{proof}

\section{Simulations}
\label{sec:sim}
We analyze our theoretical results empirically through a Monte Carlo simulation, presented in Figure \ref{fig:compeff}. We simulate $200$ instances where a random game $\G$ is generated and the welfare along a $4$-round best response path is recorded. Each game $\G$ has $10$ agents and two sets of $10$ resources, where the values $v_r$ are drawn uniformly from the continuous interval $[0, 1]$. Each agent in the game has two singleton actions, where in one action the agent selects a resource in the first set and in the other action the agent selects a resource in the second set, as seen in Figure \ref{fig:simworst}. The two resources in each agent's actions are selected uniformly, at random. These parameters were chosen to induce a natural set of game circumstances that have a rich best response behavior. Additionally, the game is either endowed with the utility rule $f_{MC}$ that optimizes $\pob(f, 1)$ or $f_{\poa}$ that optimizes $\poa(f)$. Then the best response round algorithm is performed, where each agent successively performs a best response. The ratio of the welfare achieved at each time step following the two utility rules $f_{MC}$, $f_{\poa}$ is displayed in Figure \ref{fig:compeff}.

\begin{figure}[ht]
    \centering
    \includegraphics[width=240pt]{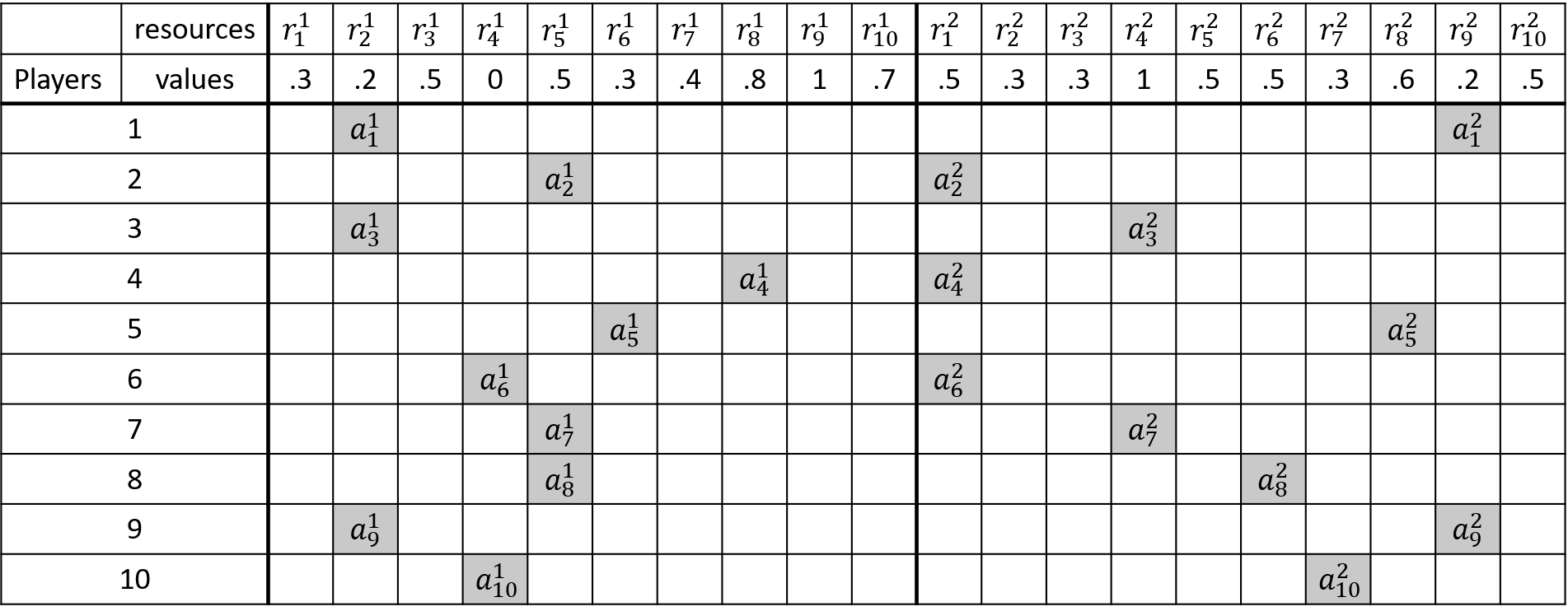}
    \caption{An example of a game that was generated within the Monte Carlo simulation described in Section \ref{sec:sim}. The highlighted boxes represent the possible single resource selections that each agent can select. The two sets of resources are divided by a bold line. For each game generated, the welfare ratio between the best response sequences corresponding with the utility rules $f_{MC}$ and $f_{\poa}$ was recorded at each time step.}
    \label{fig:simworst}
\end{figure}

\begin{figure}[ht]
    \centering
    \includegraphics[width=270pt]{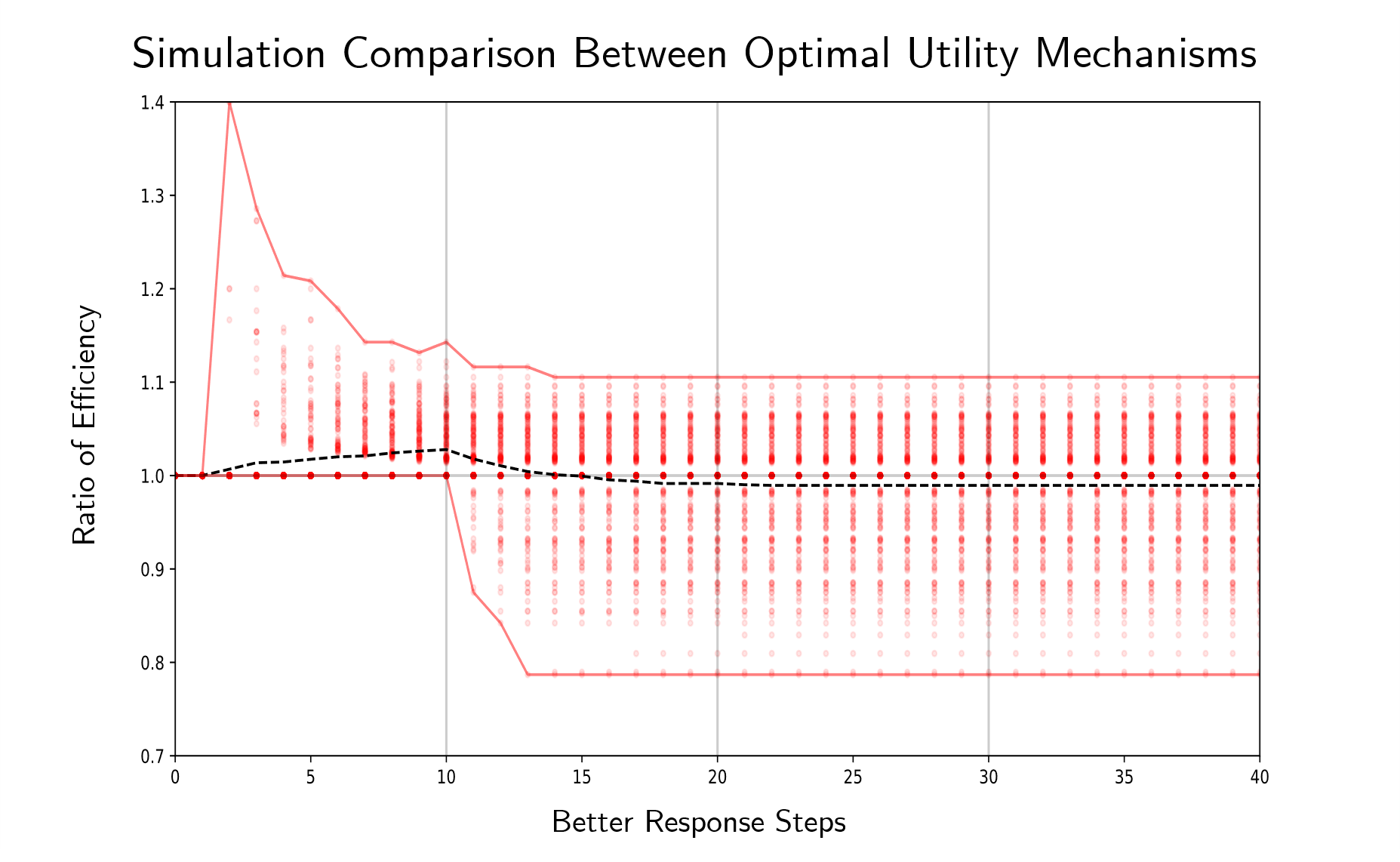}
    \caption{A Monte Carlo simulation with $10$ agents and $2$ set of resources was conducted, where the utility rules $f_{MC}$ and $f_{\poa}$ were used in separate runs. All the agents are randomly assigned a single resource from each set in their $2$ actions. The ratio $\W_{MC}(a)/\W_{\poa}(a)$ (from using either of the utility rules) is plotted at each time step along the two better response paths. The average ratio across all runs is displayed in black and the minimum and maximum ratios are outlined with a red line. While on average, the performance from using $f_{MC}$ is greater than $f_{\poa}$ for $1$ round, this performance difference is recuperated after running the best response algorithm for more steps.}
    \label{fig:compeff}
\end{figure}

Several significant observations can be made from Figure \ref{fig:compeff}. Observe that after one round (after $30$ best response steps), the performance of $f_{MC}$ exceeds the performance $f_{\poa}$ on average. This is indicative of the trade-off shown in Theorem \ref{thm:poapobtradeoff}. Furthermore, running the best-response algorithm for additional steps recovers the performance difference between $f_{MC}$ and $f_{\poa}$, with $f_{\poa}$ having a slightly better end performance on average. These results support our claim that optimizing for asymptotic results may induce sub-optimal system behavior in the short term, and those results must be carefully interpreted for application in multi-agent scenarios. By analyzing the performance directly along these dynamics, we see that a richer story emerges in both the theoretical and the empirical results.

\section{Conclusion}
\label{sec:conc}
There are many works that focus on efficiency guarantees for the asymptotic behavior of multi-agent systems under the game-theoretic notion of Nash equilibrium. However, much less is understood about the efficiency guarantees in the transient behavior of such systems. In this paper, we examine a natural class of best response dynamics and characterize the efficiency along the resulting transient states. We characterize a utility design that performs optimally in the transient for the class of set covering games. Furthermore, we complement these results by providing an explicit characterization of the efficiency trade-off between $1$-round best response and Nash equilibrium that can result from a possible utility design. From these results, we observe that there is a significant misalignment between the optimization of transient and asymptotic behavior for these systems. Interesting future directions include extending our results to a more general class of games and considering other classes of learning dynamics from the literature.

\bibliographystyle{ieeetr}
\bibliography{references.bib}

\begin{thebibliography}{10}

\bibitem{han2012game}
Z.~Han, D.~Niyato, W.~Saad, T.~Ba{\c{s}}ar, and A.~Hj{\o}rungnes, {\em Game
  theory in wireless and communication networks: theory, models, and
  applications}.
\newblock Cambridge university press, 2012.

\bibitem{roldan2018should}
J.~J. Rold{\'a}n, J.~Del~Cerro, and A.~Barrientos, ``Should we compete or
  should we cooperate? applying game theory to task allocation in drone
  swarms,'' in {\em 2018 IEEE/RSJ International Conference on Intelligent
  Robots and Systems (IROS)}, pp.~5366--5371, IEEE, 2018.

\bibitem{hsu2020information}
C.-C. Hsu, A.~Ajorlou, M.~Yildiz, and A.~Jadbabaie, ``Information disclosure
  and network formation in news subscription services,'' in {\em 2020 59th IEEE
  Conference on Decision and Control (CDC)}, pp.~5586--5591, IEEE, 2020.

\bibitem{hota2019game}
A.~R. Hota and S.~Sundaram, ``Game-theoretic vaccination against networked sis
  epidemics and impacts of human decision-making,'' {\em IEEE Transactions on
  Control of Network Systems}, vol.~6, no.~4, pp.~1461--1472, 2019.

\bibitem{9028855}
K.~{Paarporn}, R.~{Chandan}, M.~{Alizadeh}, and J.~R. {Marden},
  ``Characterizing the interplay between information and strength in blotto
  games,'' in {\em 2019 IEEE 58th Conference on Decision and Control (CDC)},
  pp.~5977--5982, 2019.

\bibitem{1181966}
A.~{Vetta}, ``Nash equilibria in competitive societies, with applications to
  facility location, traffic routing and auctions,'' in {\em The 43rd Annual
  IEEE Symposium on Foundations of Computer Science, 2002. Proceedings.},
  pp.~416--425, 2002.

\bibitem{martinez2020decentralized}
J.~Martinez-Piazuelo, N.~Quijano, and C.~Ocampo-Martinez, ``Decentralized
  charging coordination of electric vehicles using multi-population games,'' in
  {\em 2020 59th IEEE Conference on Decision and Control (CDC)}, pp.~506--511,
  IEEE, 2020.

\bibitem{lee2020perimeter}
E.~S. Lee, D.~Shishika, and V.~Kumar, ``Perimeter-defense game between aerial
  defender and ground intruder,'' in {\em 2020 59th IEEE Conference on Decision
  and Control (CDC)}, pp.~1530--1536, IEEE, 2020.

\bibitem{marden2015game}
J.~R. Marden and J.~S. Shamma, ``Game theory and distributed control,'' in {\em
  Handbook of game theory with economic applications}, vol.~4, pp.~861--899,
  Elsevier, 2015.

\bibitem{gairing2009covering}
M.~Gairing, ``Covering games: Approximation through non-cooperation,'' in {\em
  International Workshop on Internet and Network Economics}, pp.~184--195,
  Springer, 2009.

\bibitem{feige1998threshold}
U.~Feige, ``A threshold of ln n for approximating set cover,'' {\em Journal of
  the ACM (JACM)}, vol.~45, no.~4, pp.~634--652, 1998.

\bibitem{monderer1996potential}
D.~Monderer and L.~S. Shapley, ``Potential games,'' {\em Games and economic
  behavior}, vol.~14, no.~1, pp.~124--143, 1996.

\bibitem{young1993evolution}
H.~P. Young, ``The evolution of conventions,'' {\em Econometrica: Journal of
  the Econometric Society}, pp.~57--84, 1993.

\bibitem{nash1950equilibrium}
J.~F. Nash {\em et~al.}, ``Equilibrium points in n-person games,'' {\em
  Proceedings of the national academy of sciences}, vol.~36, no.~1, pp.~48--49,
  1950.

\bibitem{sandholm2020evolutionary}
W.~H. Sandholm, ``Evolutionary game theory,'' {\em Complex Social and
  Behavioral Systems: Game Theory and Agent-Based Models}, pp.~573--608, 2020.

\bibitem{christodoulou2006convergence}
G.~Christodoulou, V.~S. Mirrokni, and A.~Sidiropoulos, ``Convergence and
  approximation in potential games,'' in {\em Annual Symposium on Theoretical
  Aspects of Computer Science}, pp.~349--360, Springer, 2006.

\bibitem{bilo2009performances}
V.~Bil{\`o}, A.~Fanelli, M.~Flammini, and L.~Moscardelli, ``Performances of
  one-round walks in linear congestion games,'' in {\em International Symposium
  on Algorithmic Game Theory}, pp.~311--322, Springer, 2009.

\bibitem{bilo2018unifying}
V.~Bilo, ``A unifying tool for bounding the quality of non-cooperative
  solutions in weighted congestion games,'' {\em Theory of Computing Systems},
  vol.~62, no.~5, pp.~1288--1317, 2018.

\bibitem{mirrokni2004convergence}
V.~S. Mirrokni and A.~Vetta, ``Convergence issues in competitive games,'' in
  {\em Approximation, randomization, and combinatorial optimization. algorithms
  and techniques}, pp.~183--194, Springer, 2004.

\bibitem{fanelli2008speed}
A.~Fanelli, M.~Flammini, and L.~Moscardelli, ``The speed of convergence in
  congestion games under best-response dynamics,'' in {\em International
  Colloquium on Automata, Languages, and Programming}, pp.~796--807, Springer,
  2008.

\bibitem{hochbaum1985approximation}
D.~S. Hochbaum and W.~Maass, ``Approximation schemes for covering and packing
  problems in image processing and vlsi,'' {\em Journal of the ACM (JACM)},
  vol.~32, no.~1, pp.~130--136, 1985.

\bibitem{mandal2021covering}
S.~Mandal, N.~Patra, and M.~Pal, ``Covering problem on fuzzy graphs and its
  application in disaster management system,'' {\em Soft Computing}, vol.~25,
  no.~4, pp.~2545--2557, 2021.

\bibitem{nash1977optimal}
J.~M. Nash, ``Optimal allocation of tracking resources,'' in {\em 1977 IEEE
  Conference on Decision and Control including the 16th Symposium on Adaptive
  Processes and A Special Symposium on Fuzzy Set Theory and Applications},
  pp.~1177--1180, IEEE, 1977.

\bibitem{huang2005coverage}
C.-F. Huang and Y.-C. Tseng, ``The coverage problem in a wireless sensor
  network,'' {\em Mobile networks and Applications}, vol.~10, no.~4,
  pp.~519--528, 2005.

\bibitem{goemans2005sink}
M.~Goemans, V.~Mirrokni, and A.~Vetta, ``Sink equilibria and convergence,'' in
  {\em 46th Annual IEEE Symposium on Foundations of Computer Science
  (FOCS'05)}, pp.~142--151, IEEE, 2005.

\bibitem{paccagnan2018utility}
D.~Paccagnan and J.~R. Marden, ``Utility design for distributed resource
  allocation--part ii: Applications to submodular, covering, and supermodular
  problems,'' {\em arXiv preprint arXiv:1807.01343}, 2018.

\bibitem{vetta2002nash}
A.~Vetta, ``Nash equilibria in competitive societies, with applications to
  facility location, traffic routing and auctions,'' in {\em The 43rd Annual
  IEEE Symposium on Foundations of Computer Science, 2002. Proceedings.},
  pp.~416--425, IEEE, 2002.

\bibitem{ramaswamy2019multiagent}
V.~Ramaswamy, D.~Paccagnan, and J.~R. Marden, ``Multiagent maximum coverage
  problems: The trade-off between anarchy and stability,'' in {\em 2019 18th
  European Control Conference (ECC)}, pp.~1043--1048, IEEE, 2019.

\end{thebibliography}
\end{document}